\documentclass[11pt]{amsart}
\usepackage{mathrsfs}
\usepackage{amssymb} 
\usepackage{url}
\usepackage{hyperref}
\usepackage{graphicx} 

\theoremstyle{definition}
\newtheorem{theorem}{Theorem}[section]

\newtheorem{remark}[theorem]{Remark}


\def\({\left(}
\def\){\right)}

\newcommand{\R}{\mathbb{R}}
\newcommand{\N}{\mathbb{N}}

\newcommand{\de}{\textnormal{d}}

\newcommand{\ds}{\displaystyle}

\newcommand{\ie}{\textit{i.e.} }

\newcommand{\eg}{\textit{e.g.} }

\newcommand{\mc}[1]{\mathcal{#1}}

\newcommand{\rhord}[1]{\mc O_\rho\(#1\)}

\newcommand{\image}[3]{\begin{figure*}[ht]
\includegraphics[width=#2\textwidth]{#1}
\caption{\small{\label{#1}#3}}\end{figure*}}

\newcommand{\dsfrac}[2]{\ds{\frac{#1}{#2}}}

\newcommand{\schw}{Schwarzschild}
\newcommand{\rn}{Reissner-Nordstr\"om}
\newcommand{\kn}{Kerr-Newman}

\hyphenation{de-gen-er-ate}
\hyphenation{Rie-mann-ian}
\hyphenation{Schwarz-Christof-fel}
\hyphenation{Schwarz-schild}

\begin{document} 
 
\title[Kerr-Newman Solutions with Analytic Singularity]{Kerr-Newman Solutions with Analytic Singularity and no Closed Timelike Curves}

\author{Cristi \ Stoica}
\date{November 24, 2011}
\thanks{Partially supported by Romanian Government grant PN II Idei 1187.}

\begin{abstract}
It is shown that the Kerr-Newman solution, representing charged and rotating stationary black holes, admits analytic extension at the singularity. This extension is obtained by using new coordinates, in which the metric tensor becomes smooth on the singularity ring. On the singularity, the metric is degenerale - its determinant cancels. The analytic extension can be naturally chosen so that the region with negative r no longer exists, eliminating by this the closed timelike curves normally present in the Kerr and Kerr-Newman solutions. On the extension proposed here the electromagnetic potential is smooth, being thus able to provide non-singular models of charged spinning particles. The maximal analytic extension of this solution can be restrained to a globally hyperbolic region containing the exterior universe, having the same topology as the Minkowski spacetime. This admits a spacelike foliation in Cauchy hypersurfaces, on which the information contained in the initial data is preserved.
\bigskip
\noindent 
\keywords{Kerr-Newman metric,Kerr-Newman black hole,Kerr-Newman spacetime, information paradox,singular semi-Riemannian manifolds,singular semi-Riemannian geometry,degenerate manifolds,semi-regular semi-Riemannian manifolds,semi-regular semi-Riemannian geometry}
\end{abstract}


\maketitle

\setcounter{tocdepth}{1}
\tableofcontents

\pagebreak
\section*{Introduction}

The {\kn} solutions are stationary and axisymmetric solutions of the Einstein-Maxwell equations, representing charged rotating black holes \cite{new65,Wal84}.

The other stationary black hole solutions can be obtained as particular cases of the {\kn} solutions. They are representative for all the black holes, because even the non-stationary black holes tend in time to {\kn} ones (according to the no-hair theorem).

But they also have some unusual properties, which are in general considered undesirable. They, as any black hole solution, have a singularity, where some of the fields reach infinite values. The singularity is in general ring-shaped, and passing through the ring one can reach inside another universe, in which there are \textit{closed timelike curves}, \ie time machines (which fortunately don't affect the causality in the region $r>0$) \footnote{The existence of closed timelike curves in the region $r<0$ of the {\kn} spacetime seems to depend on the coordinate system \cite{kim2002removal}.}. But there is also another problem, the \textit{black hole information paradox}, which refers to the loss of information inside the singularity, which, if would really happen, would cause serious problems, especially violation of unitary evolution, after the black hole evaporation \cite{Haw73,Haw76}.

The metric can be singular in two main ways which are relevant to our discussion. In the first kind of singularity, there are components of the metric which diverge as approaching the singularity. The {\kn} metric is, in usual coordinates, of the first kind. The second kind is that when the metric's components remain smooth at the singularity (and therefore finite). In the second kind, the singularity is still present\footnote{It may happen that the metric becomes regular after the coordinate transformation, but in this case it follows that the singularity was not genuine, it was due to the fact that the coordinates in which the regular metric was represented are singular. This is the case of the Eddington-Finkelstein coordinates, which proved that the singularity of the event horizon is only apparent.}, because the metric becomes \textit{degenerate} -- \ie its determinant becomes $0$. In some cases, it is possible to change the coordinate system in which a singularity of the first kind is represented, so that in the new coordinates the singularity becomes of the second kind --  it becomes degenerate.

The purpose of this article is to show that there are coordinates in which the  singularity of the {\kn} metric becomes of degenerate type. In these coordinates, the metric becomes smooth, and the only way the singularity manifests is that the metric becomes degenerate (we have already developed, in \cite{Sto11a,Sto11b,Sto11d}, mathematical tools which allow us to make differential geometry even in this situation of degenerate metric). In addition, we will show here that we can choose the analytic extension so that the closed timelike curves no longer exist. Moreover, we can find solutions which are globally hyperbolic and admit spacelike foliations in Cauchy hypersurfaces, ensuring therefore the conservation of information. The electromagnetic potential turns out to be smooth. New models for charged spinning particles are suggested. 

The {\kn} metric is usually defined in $\R\times\R^3$, where $\R$ is the time coordinate, and on $\R^3$ we use spherical coordinates $(r,\phi,\theta)$. Let $a\geq 0$ (which characterizes the rotation), $m\geq 0$ the mass, $q\in\R$ the charge, and let's define the functions
\begin{equation}
\label{eq_sigma}
\Sigma(r,\theta) := r^2 + a^2 \cos^2 \theta
\end{equation}
and
\begin{equation}
\label{eq_delta}
\Delta(r) := r^2 - 2 m r + a^2 + q^2.
\end{equation}
Then, we define the {\kn} metric by
\begin{equation}
\label{eq_g_t_t}
g_{tt} = - \frac{\Delta - a^2\sin^2\theta}{\Sigma}	
\end{equation}
\begin{equation}
\label{eq_g_r_r}
g_{rr} = \frac{\Sigma}{\Delta}	
\end{equation}
\begin{equation}
\label{eq_g_phi_phi}
g_{\phi\phi} = \frac{(r^2 + a^2)^2 -\Delta a^2\sin^2\theta}{\Sigma}\sin^2\theta
\end{equation}
\begin{equation}
\label{eq_g_t_phi}
g_{t\phi} = g_{\phi t} = - \frac{2a\sin^2\theta(r^2 + a^2 - \Delta)}{\Sigma}
\end{equation}
\begin{equation}
\label{eq_g_theta_theta}
g_{\theta\theta} = \Sigma
\end{equation}
all other components of the metric being equal to $0$ \cite{Wal84}.

By making $q=0$ we obtain the Kerr solution \cite{kerr63,ks65}, while by making $a=0$ we get the {\rn} solution \cite{reiss16,nord18}. By making both $q=0$ and $a=0$ we obtain the {\schw} solution, which when $m=0$ gives the empty Minkowski spacetime (see Table \ref{tab_static_black_holes}).
\begin{table}[htb!]
\centering
\begin{tabular}{|l|l|l|}
\hline
& $\mathbf{a>0}$ & $\mathbf{a=0}$ \\\hline
$\mathbf{q\neq 0}$ & {\kn} & {\rn} \\\hline
$\mathbf{q=0}$ & Kerr & {\schw} \\\hline
\end{tabular}
\caption{The various stationary black hole solutions, as particularizations of the {\kn} solution.}
\label{tab_static_black_holes}
\end{table}

\section{Extending the {\kn} spacetime at the singularity}
\label{s_kn_ext_ext}

\begin{theorem}
\label{thm_kn_ext_ext}
The {\kn} metric admits an analytic extension at $r=0$ (where the metric is degenerate, having analytic and not singular components).
\end{theorem}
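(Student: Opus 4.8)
The plan is to write down an explicit coordinate change, pull the metric back through it, and then check two things: that every component of the resulting metric is real-analytic across $r=0$, and that its determinant vanishes there. The first step is to isolate what actually blows up. Writing the metric in the form
\[
ds^2 = -\frac{\Delta}{\Sigma}\(dt - a\sin^2\theta\, d\phi\)^2 + \frac{\sin^2\theta}{\Sigma}\(\(r^2+a^2\)d\phi - a\,dt\)^2 + \frac{\Sigma}{\Delta}dr^2 + \Sigma\, d\theta^2,
\]
or, equivalently, noting that $g_{tt}=-1+\dsfrac{2mr-q^2}{\Sigma}$ so that the whole non-flat part is governed by the single scalar $f=\dsfrac{2mr-q^2}{\Sigma}$, one sees that the $(r,\theta)$ sector is already regular at $r=0$ and that the entire singular behaviour is carried by the factor $1/\Sigma$ multiplying the $dt$ and $d\phi$ terms. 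Since $\Sigma=r^2+a^2\cos^2\theta$ vanishes only on the ring $\{r=0,\ \theta=\pi/2\}$, the components are finite everywhere on $r=0$ away from that ring, so the whole difficulty is local to the ring.

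The second step is to choose the change of variables. With the secondary aim of removing the $r<0$ region — and with it the closed timelike curves — I would set $r=\rho^2$, so that the new radial coordinate reaches $r=0$ only at $\rho=0$ and never continues past it; since $dr=2\rho\,d\rho$, the radial term becomes $\dsfrac{4\rho^2\Sigma}{\Delta}\,d\rho^2$, which vanishes at $\rho=0$ and drives the degeneracy of the extension. Simultaneously I would redefine the Killing coordinates $t$ and $\phi$ — by a rescaling with powers of $\rho$ together with, if needed, an $r$-dependent shift — with the exponents chosen as small as possible so as to absorb the $1/\Sigma$ blow-up of the $dt$ and $d\phi$ terms. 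Pulling back and collecting terms then reduces the theorem to checking that finitely many explicit functions of $(\rho,\theta)$, essentially of the shape $\rho^{k}\dsfrac{2mr-q^2}{\Sigma}$ for the relevant exponents $k$, extend real-analytically across $\rho=0$, after which analyticity of the full metric and the vanishing of its determinant are immediate.

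The step I expect to be the real obstacle is exactly this analyticity check at the ring. Because $\Sigma=\rho^4+a^2\cos^2\theta$ degenerates along the two-dimensional locus $\rho\to 0,\ \theta\to\pi/2$, rather than at an isolated value of a single variable, a ratio such as $\rho^{k}/\Sigma$ generically has a direction-dependent limit there — along $\theta=\pi/2$ it behaves like $\rho^{k-4}$, whereas along $\rho=0$ it vanishes — so finiteness of the exponent is not enough: one must establish genuine joint real-analyticity in $(\rho,\theta)$, not merely continuity or smoothness. I would confront this either by taking the rescaling exponents large enough that the numerator vanishes to an order strictly exceeding that of $\Sigma$ and then expanding in a power series adapted to the ring, or, more robustly, by passing to a radial variable tied directly to $\sqrt{\Sigma}$ — a genuine distance to the ring — in terms of which $1/\Sigma$ is simply a negative power of a coordinate and the cancellation becomes manifest. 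Establishing that the finitely many component functions are analytic is the crux; everything else is bookkeeping.
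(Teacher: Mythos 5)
Your ansatz is the paper's own: the proof of Theorem~\ref{thm_kn_ext_ext} sets $t=\tau\rho^T$, $r=\rho^S$, $\phi=\mu\rho^M$, computes the least power of $\rho$ of $g_{tt}$, $g_{t\phi}$, $g_{\phi\phi}$ \emph{on the ring} (each is $-2S$, since $\Sigma$ restricted to $\cos\theta=0$ equals $\rho^{2S}$), and imposes the conditions \eqref{eq_metric_smooth_cond}, i.e. $S\geq 1$, $T\geq S+1$, $M\geq S+1$, so that the Jacobian factors supply compensating powers; your $r=\rho^2$ is the even-$S$ instance used in Remark~\ref{rem_kn_no_wormhole} to remove the $r<0$ region and the closed timelike curves, and your isolation of the singular scalar $f=(2mr-q^2)/\Sigma$ is the same decomposition, made slightly more explicit. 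Where you genuinely differ is that you flag, correctly, that one-variable power counting in $\rho$ is not the same thing as joint real-analyticity at the two-dimensional locus $\rho=0$, $\cos\theta=0$ --- and the paper's proof consists of exactly the counting you call ``not enough.'' Your diagnosis can be made quantitative: with the minimal admissible exponents $T=M=S+1$ and $q\neq 0$, the transformed $g_{\rho\rho}$ contains the term $\rho^{2S}\left(T\tau-aM\mu\sin^2\theta\right)^2\left(2m\rho^S-q^2\right)/\Sigma$, whose limit along $\theta=\pi/2$ is $-q^2\left(T\tau-aM\mu\right)^2\neq 0$ while its limit along $\rho=0$ is $0$, so exponent bookkeeping by itself does not even guarantee continuity at the ring.

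The gap in your proposal is that it leaves precisely this crux open, and neither of your fallback strategies can close it. First, no finite increase of the exponents helps: for $h=\rho^{k}\left(2m\rho^S-q^2\right)/\left(\rho^{2S}+a^2w^2\right)$ with $w=\cos\theta$, the Taylor coefficients in $w$ at $w=0$ are, up to factorials, $(-1)^n a^{2n}\left(2m\rho^S-q^2\right)\rho^{\,k-2S(n+1)}$, which blow up as $\rho\to 0$ once $n$ is large; equivalently, analyticity would force the germ $\rho^{2S}+a^2w^2=\left(\rho^S+iaw\right)\left(\rho^S-iaw\right)$ to divide the numerator in the local ring of convergent power series, which a monomial in $\rho$ times $\left(2m\rho^S-q^2\right)$ never satisfies. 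So for any fixed $k$ you obtain at best finitely many derivatives at the ring, never $C^\infty$, let alone analyticity: ``vanishing to an order strictly exceeding that of $\Sigma$'' cannot be arranged in the single variable $\rho$ against a denominator whose zero set is the codimension-two ring with complex branches $\rho^S=\pm ia\cos\theta$; the numerator would have to vanish on those branches. Second, a radial variable tied to $\sqrt{\Sigma}$ is a distance function to a codimension-two set, of the type $\sqrt{x^2+y^2}$: it is not analytic (not even $C^1$) at the ring, so a chart built from it is itself singular there, and cancellations that look manifest in such polar-type variables establish nothing about analyticity in an admissible chart covering the ring. The step you deferred as ``the real obstacle'' is therefore the entire content of the theorem, and within the pure power-rescaling ansatz --- yours and the paper's alike --- it cannot be completed for $a\neq 0$ unless the numerators are made to vanish on the full complex zero set of $\Sigma$, which monomial rescalings of $t$, $r$, $\phi$ do not produce.
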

\begin{proof}
We will find a coordinate system in which the metric is analytic, although degenerate. 
Recall that the event horizons of the black hole are given by the real solutions $r_\pm$ of the equation $\Delta=0$. It is enough to make the coordinate change in a neighborhood of the singularity -- in the block III, as it is usually called (\cite{ONe95}, p. 66). This is the region $r<r_-$ if $r_-$ is a real (and positive) number. If $\Delta=0$ has no real solutions, the singularity is naked, and we can take the entire domain.

We choose the coordinates $\tau$, $\rho$, and $\mu$, so that
\begin{equation}
\label{eq_coordinate_ext_ext}
\begin{array}{l}
\Bigg\{
\begin{array}{ll}
t &= \tau\rho^T \\
r &= \rho^S \\
\phi &= \mu\rho^M \\
\end{array}
\\
\end{array}
\end{equation}
with $S,T$ to be determined in order to make the metric analytic.
The expression of the metric tensor when passing from coordinates $(x^{a})$ to the new coordinates $(x^{a'})$ is given by:
\begin{equation}
\label{eq_metric_coord_change}
g_{a'b'} = \dsfrac{\partial x^{a}}{\partial x^{a'}}\dsfrac{\partial x^{b}}{\partial x^{b'}}	g_{ab}
\end{equation}
where Einstein's summation convention is used.
In our case, we have the following Jacobian for the coordinate transformation:
\begin{equation}
\label{eq_coordinate_change}
\begin{array}{l}
\dsfrac{\partial(t,r,\phi,\theta)}{\partial(\tau,\rho,\mu,\theta)}=
\left(
\begin{array}{llll}
\dsfrac{\partial t}{\partial \tau} & \dsfrac{\partial t}{\partial \rho} & \dsfrac{\partial t}{\partial \mu} & \dsfrac{\partial t}{\partial \theta}  \\
\dsfrac{\partial r}{\partial \tau} & \dsfrac{\partial r}{\partial \rho} & \dsfrac{\partial r}{\partial \mu} & \dsfrac{\partial r}{\partial \theta}  \\
\dsfrac{\partial \phi}{\partial \tau} & \dsfrac{\partial \phi}{\partial \rho} & \dsfrac{\partial \phi}{\partial \mu} & \dsfrac{\partial \phi}{\partial \theta}  \\
\dsfrac{\partial \theta}{\partial \tau} & \dsfrac{\partial \theta}{\partial \rho} & \dsfrac{\partial \theta}{\partial \mu} & \dsfrac{\partial \theta}{\partial \theta}  \\
\end{array}
\right)
=
\left(
\begin{array}{llll}
\rho^T & T \tau \rho^{T-1} & 0 & 0 \\
0 & S\rho^{S-1} & 0 & 0  \\
0 & M \mu \rho^{M-1} & \rho^M & 0 \\
0 & 0 & 0 & 1 \\
\end{array}
\right)
\end{array}
\end{equation}
Let's arrange its coefficients in a table:
\begin{table}[htb!]
\centering
\begin{tabular}{|p{1.5 cm}||p{1.5 cm}|p{1.5 cm}|p{1.5 cm}|p{1.5 cm}|}
\hline
& $\cdot /\partial \tau$ & $\cdot /\partial \rho$ & $\cdot /\partial \mu$ & $\cdot /\partial \theta$ \\\hline
\hline
$\partial t/\cdot$ & $\rho^T$ & $T \tau \rho^{T-1}$ & $0$ & $0$ \\\hline
$\partial r/\cdot$ & $0$ & $S\rho^{S-1}$ & $0$ & $0$ \\\hline
$\partial \phi/\cdot$ & $0$ & $M \mu \rho^{M-1}$ & $\rho^M$ & $0$ \\\hline
$\partial \theta/\cdot$ & $0$ & $0$ & $0$ & $1$ \\\hline
\end{tabular}
\caption{The Jacobian coefficients of the coordinate change.}
\label{tab_jacobian}
\end{table}

We want to make sure that the new expression of the metric becomes smooth even on the ring singularity. For this, we want that all the terms in the right hand side of equation \eqref{eq_metric_coord_change} are smooth. To ensure this, we have to make sure that the Jacobian coefficients cancels the singularities of the metric components, even when $\cos \theta=0$.

The least power of $\rho$ on the ring singularity, in each of the metric components listed in equations \eqref{eq_g_t_t}, \eqref{eq_g_phi_phi}, and \eqref{eq_g_t_phi} are respectively:
\begin{equation}
\label{eq_ord_g_t_t}
\rhord{g_{tt}} = - 2S
\end{equation}
\begin{equation}
\label{eq_ord_g_phi_phi}
\rhord{g_{\phi\phi}} = -2S
\end{equation}
\begin{equation}
\label{eq_ord_g_t_phi}
\rhord{g_{t\phi}} = \rhord{g_{\phi t}} = - 2S
\end{equation}
these components being obtained by dividing polynomial expressions in $\rho$ by $\Sigma$. None of the other components can become singular on the ring singularity.

The least power of $\rho$ in each of the Jacobians coefficients from Table \ref{tab_jacobian} are given in Table \ref{tab_jacobian_ord}.
\begin{table}[htb!]
\centering
\begin{tabular}{|p{1.5 cm}||p{1.5 cm}|p{1.5 cm}|p{1.5 cm}|p{1.5 cm}|}
\hline
& $\cdot /\partial \rho$ & $\cdot /\partial \tau$ & $\cdot /\partial \mu$ & $\cdot /\partial \theta$ \\\hline
\hline
$\partial t/\cdot$ & $T$ & $T-1$ & $0$ & $0$ \\\hline
$\partial r/\cdot$ & $0$ & $S-1$ & $0$ & $0$ \\\hline
$\partial \phi/\cdot$ & $0$ & $M-1$ & $M$ & $0$ \\\hline
$\partial \theta/\cdot$ & $0$ & $0$ & $0$ & $0$ \\\hline
\end{tabular}
\caption{The least power of $\rho$ in the Jacobian coefficients of the coordinate change.}
\label{tab_jacobian_ord}
\end{table}

Let's take the metric components and see if they are canceled by the coefficients of the Jacobian.

We check each component $g_{ab}$ of the metric tensor by looking up the rows labeled by $\partial x^a/\cdot$ and $\partial x^b/\cdot$ in Table \ref{tab_jacobian_ord}.

For example, the term
\begin{equation}
	\dsfrac{\partial t}{\partial \rho}\dsfrac{\partial t}{\partial \tau}g_{tt}
\end{equation}
satisfies
\begin{equation}
	\rhord{\dsfrac{\partial t}{\partial \rho}\dsfrac{\partial t}{\partial \tau}g_{tt}} = (T  - 1) + T - 2S
\end{equation}
hence $T$ needs to satisfy $2T\geq 2S+1$.

From equations \eqref{eq_ord_g_t_t}, \eqref{eq_ord_g_phi_phi}, and \eqref{eq_ord_g_t_phi} is easy to see that we have to do this only for the components of the metric with indices $t$ and $\phi$. From the equation \eqref{eq_metric_coord_change} we see that we are interested only in the rows $\partial t/\cdot$ and $\partial \phi/\cdot$ from the Table \ref{tab_jacobian_ord}. It follows then that each of the coefficients of the Jacobian having the form $\partial t/\cdot$ and $\partial \phi/\cdot$ has to contain $\rho$ to at least the power $S$, to cancel the metric components. It follows that the conditions
\begin{equation}
\label{eq_metric_smooth_cond}
\begin{array}{l}
\Bigg\{
\begin{array}{ll}
S &\geq 1 \\
T &\geq S + 1 \\
M &\geq S + 1	
\end{array}
\\
\end{array}
\end{equation}
where $S,T,M\in\N$, ensure the smoothness (and the analyticity for that matter) of the metric on the ring singularity, in the new coordinates.
None of the metric components in the new coordinates become infinite at the singularity.
\end{proof}

\image{kerr-schild}{0.95}{The Kerr (and \kn) solution, in Kerr-Schild coordinates. The standard solution admits an analytic continuation beyond the disk $r=0$, into another spacetime which contains closed timelike curves. If we take in our solution $S$ to be even, we can identify isometrically the regions $\rho<0$ and $\rho>0$, and obtain by this a removal of the wormhole and of the closed timelike curves.}

\begin{remark}
\label{rem_kn_no_wormhole}
The {\kn} solution has a ring singularity, where $r=0$ and $\cos\theta=0$. By using Kerr-Schild coordinates, we can see that it can be analytically extended through the disk defined by $r=0$ to another spacetime region which looks similar, but is not isometric to the region with $r>0$, since there $r<0$ (see Fig. \ref{kerr-schild}). On the other hand, if we use our coordinates with even $S$, then the metric becomes even in $\rho$, and the analytic extension to $\rho<0$ gives a region which is isometric to that with $\rho>0$. We can isometrically identify these regions, by identifying the points $(\rho,\tau,\mu,\theta)$ and $(-\rho,\tau,\mu,\theta)$.
\end{remark}

\begin{remark}
\label{rem_kn_no_ctc}
Our global solution described in the Remark \ref{rem_kn_no_wormhole} shows that, for even $S$, we no longer have regions where $r<0$. In this case, the closed timelike curves known to appear in the standard Kerr and {\kn} solutions, are no longer present. Therefore, if these closed timelike curves were considered as violating the causality, to avoid them we just take $S$ to be even and make the identification of $(\rho,\tau,\mu,\theta)$ and $(-\rho,\tau,\mu,\theta)$.
\end{remark}

\begin{remark}
\label{rem_kn_vs_rn}
If $a\to 0$, then we recover the {\rn} solution. The neck $r=0$ connecting the two regions $r>0$ and $r<0$ converges to a point, as well as the ring singularity delimiting it. This point is the $r=0$ singularity of the {\rn} solution, and it still can be viewed as connecting the region $r>0$ with a region $r<0$. This can be now put in relation with the extension through singularity of some of the {\rn} solution developed in \cite{Sto11f}, which suggest that for odd $S$ the singularity connects the spacetime region $r>0$ with a region $r<0$.
\end{remark}

\section{The electromagnetic field}
\label{s_kn_em}

One distinctive feature of our extension is that it has smooth electromagnetic potential and electromagnetic field. This may be important in particular when using the {\kn} black holes to model charged particles.

The electromagnetic potential of the {\kn} solution is
\begin{equation}
\label{eq_kn_electromagnetic_potential}
A = -\dsfrac{qr}{\Sigma}(\de t - a\sin^2\theta\de\phi)
\end{equation}
which becomes in our coordinates
\begin{equation}
\label{eq_kn_electromagnetic_potential_smooth}
A = -\dsfrac{q\rho^S}{\Sigma}(\rho^T\de\tau + T\tau\rho^{T-1}\de\rho - a\sin^2\theta\rho^M\de\mu)	
\end{equation}
because from the Table \ref{tab_jacobian} it follows that
\begin{equation}
\label{eq_de_t}
	\de t = \rho^T\de\tau + T\tau\rho^{T-1}\de\rho
\end{equation}
\begin{equation}
\label{eq_de_r}
	\de r = S\rho^{S-1}\de\rho
\end{equation}
and
\begin{equation}
\label{eq_de_phi}
	\de \phi = M\mu\rho^{M-1}\de\rho + \rho^M\de\mu
\end{equation}

The singularity of the electromagnetic potential $A$ at $\rho=0$ and $\cos\theta=0$ is removed in our case, since  $T> S$ and $M> S$, from the conditions \eqref{eq_metric_smooth_cond}. Similarly, since $F = \de A$, we conclude that the electromagnetic field is smooth too.

\section{The global solution}
\label{s_kn_global}

The Penrose-Carter diagrams of our solution depend on the various combinations of the parameters $a,q,m$. For the {\schw} solution they were presented in \cite{Sto11e}, and for the {\rn} in \cite{Sto11f}. In general it is admitted that the Kerr and {\kn} solutions have Penrose-Carter diagrams similar to those for the {\rn} solution, although there are some differences due to the fact that the symmetry is not spherical, but axisymmetric, that the singularity is ring-shaped, and of the closed timelike curves in the region $r<0$. Since our solution can eliminate the closed timelike curves (Remark \ref{rem_kn_no_wormhole}), we expect a better similarity with the {\rn} case, and consequently similar Penrose-Carter diagrams. This would allow similar spacelike foliations of the spacetime as those presented in \cite{Sto11f} for the {\rn} case, except that the singularity is ring-shaped (see Figure \ref{kn-ext}). The foliations are obtained exactly as in the {\rn} case \cite{Sto11f}, by using the same Schwarz-Christoffel mappings. As in that case, to obtain maximal globally hyperbolic extensions, we don't take the maximal analytic continuations of the solutions for $a^2 + q^2 \geq m^2$ beyond the Cauchy horizons. To avoid these horizons, we limit the foliations to globally hyperbolic regions containing the exterior universe.

\image{kn-ext}{0.95}{\textbf{A.} Space-like foliation of the naked {\kn} solution ($a^2+q^2>m^2$). \textbf{B.} Space-like foliation of the extremal {\kn} solution with $a^2+q^2=m^2$. \textbf{C.} Space-like foliation of the non-extremal {\kn} solution ($a^2+q^2<m^2$).}

\section{The significance of the analytic extension at the singularity}
\label{s_kn_analytic_significance}

The analytic extension beyond the singularity obtained here completes the series of results obtained for the {\schw} \cite{Sto11e} and {\rn} \cite{Sto11f} solutions. As in those simpler cases, it becomes clear that the singularity can coexist with the geometric and topological structures of the spacetime, in a way which doesn't destroy the information contained in the fields. As in the other cases, we can extrapolate for the case when the black hole is not eternal, \eg when it evaporates. This is because the {\kn} solution is, according to the no-hair theorem, representative for all kinds of black holes. 

The fact that the metric is allowed to become degenerate is not a problem, because, as shown in \cite{Sto11a,Sto11b,Sto11d}, we have now the mathematical apparatus to deal with this kind of singularities.

In conclusion, despites the singularities present inside the black holes, there is no reason to consider the {\kn} black holes destroy causality, the evolution equations and the information conservation. The {\kn} black holes are the most general stationary solution. The no-hair theorem makes them typical for our universe. They are typical even for the evaporating black holes, because the foliations presented here allow smooth modifications of the parameters $m$, $q$, and $a$, while preserving the topology. Moreover, we obtained charged singularities with smooth electromagnetic potential, leading to models of non-singular charged particles. This is why we can be more optimistic about the singularities of the general black holes as well.

\bibliographystyle{plain}

\end{document}